\newtheorem{theorem}{Theorem}
\newtheorem{lemma}[theorem]{Lemma}
\newtheorem{assumption}{Assumption}
\begin{document}
%
\title{Particle Filter for Randomly Delayed Measurements with Unknown Latency Probability}
%
%
%
\author{Ranjeet~Kumar~Tiwari, Shovan~Bhaumik, and Paresh~Date}
\maketitle

\begin{abstract}
This paper focuses on designing a particle filter for randomly delayed measurements with an unknown latency probability. A generalized measurement model is adopted which includes measurements that are delayed randomly by an arbitrary but fixed maximum number of the steps, along with random packet drops. Recursion equation for importance weights is derived under the presence of random delays. Offline and online algorithms for identification of the unknown latency parameter using the maximum likelihood criterion are proposed. Further, this work explores the conditions which ensure the convergence of the proposed particle filter. Finally, two numerical examples concerning problems of non-stationary growth model and the bearing-only tracking are simulated to show the effectiveness and superiority of the proposed filter.
\end{abstract}


%

\section{Introduction}
%
%
%
%
State estimation for non-linear discrete-time stochastic systems has been recently getting considerable attention from researchers\cite{charalampidis2011computationally,vsimandl2009derivative,julier2000new} because of its wide range of applications in various fields of science, including engineering \cite{lin2002comparison}, econometrics \cite{lopes2011particle} and meteorology \cite{stordal2011bridging}, for example. Bayesian approach \cite{ho1964bayesian} gives a recursive relation for the computation of the posterior probability density functions (pdf) of the unobserved states. But computation of the posterior pdf, in case of a non-linear system, are often numerically intractable and hence approximations of these pdf are often employed. Particle filters (PFs) are a set of powerful sequential Monte Carlo methods which can flexibly be designed under Bayesian framework to solve non-linear and non-Gaussian problems by approximating the posterior pdf empirically \cite{carpenter1999improved}. According to \cite{arulampalam2002tutorial}, particle filter outperforms its contemporary approximate Bayesian filters like the extended Kalman filter (EKF) and the grid-based filters in solving non-linear state estimation problems. However, most research on the EKF \cite{nelson1976simultaneous} and as well as on the traditional PF \cite{crisan2002survey,arulampalam2002tutorial,zuo2013adaptive} typically assumes that measurements are available at each time step without any delay. In practice, many fields including aerospace and underwater target tracking\cite{moose1985adaptive}, control applications\cite{kolmanovsky2001optimal}, communication\cite{nilsson1998stochastic} \textit{etc.} see random delays in receiving the measurements. This delay can be caused by the limitations of common network channel and it needs to be accounted while designing a filter.

In the literature, there exist a good number of research which has considered random delays for linear estimators. Ma \textit{et al.} \cite{ma2011optimal} proposed a linear estimator which deals with uncertain measurements and multiple packet dropouts along with random sensor delays. \cite{ma2016linear} has designed a linear estimator for networked system with one-step randomly delayed observations and multiple packet dropouts. A linear networked estimator \cite{yan2017networked} has been proposed recently to tackle irregularly-spaced and delayed measurements in the multi-sensor environment. On the other hand, research on random delays and packet drops for solving non-linear problems of estimation is limited and still developing. Hermoso-Carazo \textit{et al.} proposed an improved versions of EKF and UKF (unscented Kalman filter) for one-time step\cite{hermoso2007extended} and two-time step\cite{hermoso2009unscented} randomly delayed measurements. In \cite{singh2016quadrature}, quadrature filters have been modified to solve non-linear filtering problem with one-step randomly delayed measurements. Wang \textit{et al.}\cite{wang2013gaussian} used cubature Kalman filter (CKF)\cite{arasaratnam2009cubature} to tackle one-step randomly delayed measurements for non-linear systems. Later, Singh \textit{et al.} \cite{singh2017modified} proposed a methodology to solve non-linear estimation problems with multi-step randomly delayed measurements. However, all these non-linear filters are restricted to Gaussian approximations. Moreover, they have considered that latency probability of delayed measurements is known. Recently, \cite{zhang2016particle} discussed a modified PF that deals with one-step randomly delayed measurement with unknown latency probability and almost concurrently, the same authors presented a short work \cite{huang2015particle} in which they discussed a PF for multi-step randomly delayed measurements with known latency probability.  But, in these two works, there is no consideration of packet drops.

In this paper, we consider randomly delayed measurements along with a possibility of packet drops. Moreover, latency probability of the measurements which are supposed to be randomly delayed by up to a user-allocated maximum number of steps, is considered to be unknown. We develop the importance weight recursion that accounts for such randomness of delayed measurements. A method using the maximum-likelihood (ML) criterion is presented which identifies the unknown latency probability of delayed measurements and packet drops. Further, this work explores the conditions that ensure the convergence of the modified PF designed for randomly delayed measurements and packet drops.

With the help of two numerical examples, the effectiveness and superiority of the modified PF designed for arbitrary step randomly delayed measurements are demonstrated in comparison with the PF designed for non-delayed and one-step randomly delayed measurements.

 The rest of the paper is organized as follows. The problem statement is defined in section \textrm{II}. In section \textrm{III}, designing of modified PF is illustrated and later convergence is discussed. Section \textrm{IV} deals with identification of unknown latency probability. Sequence of steps in the form of algorithm has been presented for a clear picture of identification process. In Section \textrm{V}, simulation results are presented to demonstrate the superiority of modified PF. Finally, section \textrm{VI} draws some conclusions out of the proposed works.
\section{Problem statement}
Consider the non-linear dynamic system which can be described by following equations:
\begin{equation} \label{eq:1}
\textit{State equation}\qquad x_k=f_{k-1}(x_{k-1},k-1)+q_{k-1},
\end{equation}
\begin{equation} \label{eq:2}
\textit{Measurement equation}\qquad z_k=h_k(x_k,k)+v_k,
\end{equation}
where $ x_k \in \Re^{n_x} $ denotes the state vector of the system and $ z_k \in \Re^{n_z} $ is the measurement at any discrete time $ k \in (0,1,\cdots) $. $ q_k \in \Re^{n_x} $ and $ v_k \in \Re^{n_z} $ are uncorrelated white noises with arbitrary but known pdf.
Here, we consider the situation that actual measurement received may be a random delayed measurement from previous time steps. This delay can be any steps between $0$ and $N$ for any given $ k^{th}$ instant of time. If any measurement gets delayed by more than $ N$ steps, no measurement is received at estimator and hence the buffer keeps the data received at previous step itself. Here, $N$ is maximum number of admissible delays which is determined carefully as discussed in Subsections \ref{subIIIB} and \ref{subIIIC}.  

To model the delayed measurements at $ k^{th} $ instant, we choose independent and identically distributed Bernoulli random numbers $ \beta_k^i $ $ (i=1,2,\cdots,N+1) $ that take values either 0 or 1 with the unknown probability $ P(\beta_k^i=1)=p=E[\beta_k^i] $ and $ P(\beta_k^i=0)=1-p $, where $ p $ is the unknown latency parameter. If $ y_k $ is the measurement received at $ k^{th} $ instant \cite{singh2017modified}, then

\begin{align} \label{eq:3}
\begin{split} 
y_k&=(1-\beta_k^1)z_k+\beta_k^1(1-\beta_k^2)z_{k-1}+\beta_k^1\beta_k^2(1-\beta_k^3)z_{k-2}+\cdot\\&\cdot \cdot+\prod_{i=1}^N\beta_k^i(1-\beta_{k}^{N+1})z_{k-N}+[1-(1-\beta_k^1)-\beta_k^1(1-\beta_k^2)\\&-\cdots -\prod_{i=1}^N\beta_k^i(1-\beta_k^{N+1})]y_{k-1},\\
&=\sum_{j=0}^N\alpha_k^jz_{k-j} + \left(1-\sum_{j=0}^N\alpha_k^j\right)y_{k-1},
\end{split}
\end{align}
where random variable $ \alpha_k^j$ is defined as
\begin{equation}\label{eq:(4)}
\alpha_k^j=\prod_{i=0}^j\beta_k^i(1-\beta_k^{j+1}).
\end{equation} 
A measurement received at $k^{th}$ time instant, is $j$ step delayed if $\alpha_k^j=1$. Additionally, at any given $k^{th}$ instant of time, at most one of $\alpha_k^j (0\leq j\leq N)$ can be 1. If all $\alpha_k^j$ are zeros that means estimator buffer keeps the measurement received at previous step, \textit{i.e.} $y_{k-1}$. This is a case of measurement loss, as it is delayed by more than $N$ steps.


\textit{Remark 1}:
Bernoulli random variable $\beta_k^i$ and its function $\alpha_k^j$ are immensely practical to represent the real-time randomness of delays in measurements and it is widely used and accepted \cite{ma2011optimal}. Inclusion of the possibility that at a particular step $k$, there is chance that no measurement will be received, increases its practical merit. It can also be observed from \eqref{eq:3} that the same packet can be received more than once at the receiver end, which might be the case for a multi-route network. Moreover, at a given instant of time $k$, $y_k$ includes the noise from only one time step, either one from $v_{k-N:k}$ or the noise received along with $y_{k-1}$.

\textit{Remark 2}:
The latency probability of received measurements $p$, \textit{i.e.} the mean of random variable $\beta_k^i$ is unknown and that is an inevitable real scenario for a practical case. Contrary to the case of single-step delay in \cite{zhang2016particle}, here unknown latency probability is for the arbitrary step delays along with packet drops.

Now, the objective is to outline a PF algorithm for system \eqref{eq:1} with measurement model \eqref{eq:3} which assumes the knowledge of latency probability $p$. Further, $p$ is identified by maximizing the joint probability density of measurements. We propose offline as well as online algorithm to achieve this.

\section{Modified particle filter for randomly delayed measurements}
\subsection{Particle filter}
 As we know, in a sequential importance filter, posterior probability density function is replaced by its equivalent series of weighed particles which can be represented as \cite{arulampalam2002tutorial}
\begin{equation}\label{eq:5}
P(x_{0:k}|z_{1:k})=\sum_{i=1}^{ns} w_k^i\delta[x_{0:k}-x_{0:k}^i],
\end{equation}
where particles $\{x_{0:k}^i\}_{i=1}^{ns}$ are drawn from a proposal density $\textup{q}(x_{0:k}|z_{1:k})$ and weights of particles are chosen using importance principle. Normalized weight of the $i^{th}$ particle can be defined as
\begin{equation}\label{eq:6}
w_k^i=\dfrac{P(x_{0:k}^i|z_{1:k})}{\textup{q}(x_{0:k}^i|z_{1:k})}.
\end{equation}
Now, for a sequential case, we need a recursive weight update at each time step which can be formulated with help of following equations:
\begin{equation}\label{eq:7}
\begin{split}
P(x_{0:k}|z_{1:k})=\dfrac{P(z_k|x_{0:k})P(x_{0:k}|z_{1:k-1})}{P(z_k|z_{1:k-1})}\\
\propto P(z_k|x_k)P(x_k|z_{1:k-1}),\\
\end{split}
\end{equation}
where $P(z_k|z_{1:k-1})$ is a normalizing constant.
Similarly, proposal density can be assumed to be decomposed as
\begin{equation}\label{eq:8}
\textup{q}(x_{0:k}|z_{1:k})=\textup{q}(x_k|x_{0:k-1},z_{1:k})\textup{q}(x_{0:k-1}|z_{1:k-1}).
\end{equation}
 Assuming that the state vector $x_k$ follows the Markov process and if we are interested only in marginal density $P(x_{k}|z_{1:k})$, importance weight of \eqref{eq:6}, with the help of \eqref{eq:7} and \eqref{eq:8}, can be written as
\begin{align} \label{eq:9}
\begin{split}
w_k^i &\propto \dfrac{P(z_k|x_k^i)P(x_k^i|x_{k-1}^i)P(x_{k-1}^i|z_{1:k-1})}{\textup{q}(x_k^i|x_{k-1}^i,z_{1:k-1})\textup{q}(x_{k-1}^i|z_{1:k-1})}\\
&= w_{k-1}^i \dfrac{P(z_k|x_k^i)P(x_k^i|x_{k-1}^i)}{\textup{q}(x_k^i|x_{k-1}^i,z_{1:k})}.
\end{split}
\end{align}
\subsection{Modified PF for randomly delayed measurements} \label{subIIIB}
 A recursive computation for importance weights can be obtained for a nonlinear system with measurement model of \eqref{eq:3}. Before we proceed with computation of modified importance weight, some of the probability values of Bernoulli random variables related with model \eqref{eq:3} need to be obtained.
\begin{lemma} \label{Lemma 1}
	The probability of a received measurement being delayed by $j$ time step, at any instant $k$, is $ P(\alpha_k^j=1)=p^j(1-p)$, $\quad 0\leq j \leq N$.
\end{lemma}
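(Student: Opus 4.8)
The plan is to exploit the fact that $\alpha_k^j$, being a product of Bernoulli $\{0,1\}$-valued random variables, is itself $\{0,1\}$-valued, so that $\{\alpha_k^j=1\}$ is exactly the event on which every factor in the product \eqref{eq:(4)} attains its ``success'' value. First I would make explicit the indexing convention implicit in \eqref{eq:3}--\eqref{eq:(4)}, namely $\beta_k^0\equiv 1$; this is precisely what makes the $j=0$ term of the sum in \eqref{eq:3} collapse to $(1-\beta_k^1)z_k$. Under this convention, $\alpha_k^j=\prod_{i=0}^j\beta_k^i\,(1-\beta_k^{j+1})$ equals $1$ if and only if $\beta_k^1=\beta_k^2=\cdots=\beta_k^j=1$ and $\beta_k^{j+1}=0$, and equals $0$ otherwise.

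Next I would invoke the independence of the family $\{\beta_k^i\}_{i=1}^{N+1}$ together with $P(\beta_k^i=1)=p$ and $P(\beta_k^i=0)=1-p$ to factor the probability of this event:
\begin{equation*}
P(\alpha_k^j=1)=\Bigl(\prod_{i=1}^{j}P(\beta_k^i=1)\Bigr)P(\beta_k^{j+1}=0)=p^{\,j}(1-p),
\end{equation*}
valid for $1\le j\le N$. I would then dispose of the boundary case $j=0$ separately: there the product over $i$ is empty, $\alpha_k^0=1-\beta_k^1$, and $P(\alpha_k^0=1)=1-p=p^{0}(1-p)$, so the claimed formula holds uniformly for all $0\le j\le N$.

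Every step here is a direct computation, so I do not expect a genuine obstacle; the only points that require care are the convention $\beta_k^0\equiv 1$ and the empty-product interpretation in the case $j=0$, both of which I have flagged above. As an optional closing remark useful for later sections, I would note that the events $\{\alpha_k^j=1\}$, $0\le j\le N$, are pairwise disjoint, whence the probability of a measurement loss (delay exceeding $N$ steps) is $1-\sum_{j=0}^{N}p^{j}(1-p)=p^{N+1}$; this is not needed for the lemma itself but records the consistency of the model.
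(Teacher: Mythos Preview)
Your proof is correct and follows essentially the same approach as the paper: both exploit the independence of the $\beta_k^i$ to factor, the only cosmetic difference being that the paper routes the computation through the identity $P(\alpha_k^j=1)=E[\alpha_k^j]$ before applying independence to the expectation, whereas you factor the event probability directly. Your explicit handling of the convention $\beta_k^0\equiv 1$ and the $j=0$ boundary case is a welcome clarification that the paper leaves implicit.
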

\begin{proof}
	As $\alpha_k^j$ is also a Bernoulli random variable, using its expectation value and \eqref{eq:(4)}
	\begin{align} \label{eq:10}
	\begin{split}
	P(\alpha_k^j=1)=E(\alpha_k^j)
	=E\left(\prod_{i=0}^j\beta_k^i(1-\beta_k^{j+1})\right).
	\end{split}
	\end{align} 
	As $\beta_k^j$ are  i.i.d. Bernoulli random variables, we have
	\begin{align} \label{eq:11}
	\begin{split}
	E\left(\prod_{i=0}^j\beta_k^i(1-\beta_k^{j+1})\right)&=E\left(\prod_{i=0}^j\beta_k^i\right)E(1-\beta_k^{j+1})\\
	&=p^j(1-p).
	\end{split}
	\end{align}  
	Using \eqref{eq:10} and \eqref{eq:11}, we get
	\begin{equation}
	P(\alpha_k^j=1)=p^j(1-p).
	\end{equation}
\end{proof}
\begin{lemma} \label {Lemma 2}
	The probability that estimator will receive $y_{k-1}$ at $k^{th}$ instant of time, is $P(\sum_{j=0}^N \alpha_k^j =0)=p^{N+1}$. 
\end{lemma}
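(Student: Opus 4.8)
The plan is to exploit the mutual exclusivity of the events $\{\alpha_k^j=1\}$ noted right after \eqref{eq:(4)}: at any instant $k$ at most one of $\alpha_k^0,\dots,\alpha_k^N$ equals $1$. Hence the event $\{\sum_{j=0}^N\alpha_k^j=0\}$ is exactly the complement of the disjoint union $\bigcup_{j=0}^N\{\alpha_k^j=1\}$, so that
\begin{equation*}
P\!\left(\sum_{j=0}^N\alpha_k^j=0\right)=1-\sum_{j=0}^N P(\alpha_k^j=1).
\end{equation*}

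I would then substitute $P(\alpha_k^j=1)=p^j(1-p)$ from Lemma \ref{Lemma 1}, turning the right-hand side into $1-(1-p)\sum_{j=0}^N p^j$, and evaluate the finite geometric sum $\sum_{j=0}^N p^j=(1-p^{N+1})/(1-p)$ (valid for $0<p<1$; the degenerate values $p=0$ and $p=1$ are checked by inspection). This collapses the expression to $1-(1-p^{N+1})=p^{N+1}$, which is the assertion.

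A second, Lemma-free route works directly with the Bernoulli variables and is perhaps more transparent: using the telescoping identity $\prod_{i=1}^{j}\beta_k^i(1-\beta_k^{j+1})=\prod_{i=1}^{j}\beta_k^i-\prod_{i=1}^{j+1}\beta_k^i$, the coefficient of $y_{k-1}$ in \eqref{eq:3}, namely $1-\sum_{j=0}^N\alpha_k^j$, collapses to $\prod_{i=1}^{N+1}\beta_k^i$. Therefore $\{\sum_{j=0}^N\alpha_k^j=0\}=\{\beta_k^1=\dots=\beta_k^{N+1}=1\}$, and independence of the $\beta_k^i$ together with $P(\beta_k^i=1)=p$ gives $p^{N+1}$ at once.

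I do not anticipate any genuine obstacle here; the statement is essentially a normalization check. The only points that call for a little care are the index bookkeeping (the empty-product convention at $j=0$ and the mild mismatch in indexing between \eqref{eq:3} and \eqref{eq:(4)}) and, in the geometric-series version, treating the boundary cases $p=0$ and $p=1$ separately so that $(1-p^{N+1})/(1-p)$ is never applied at $p=1$.
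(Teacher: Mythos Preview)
Your primary argument is exactly the paper's proof: write $P(\sum_{j=0}^N\alpha_k^j=0)=1-\sum_{j=0}^N P(\alpha_k^j=1)$, invoke Lemma~\ref{Lemma 1}, and sum the geometric series to $p^{N+1}$. You are, if anything, more careful than the paper in justifying the disjointness step and in handling the boundary values of $p$.

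Your second route via the telescoping identity $\prod_{i=1}^{j}\beta_k^i(1-\beta_k^{j+1})=\prod_{i=1}^{j}\beta_k^i-\prod_{i=1}^{j+1}\beta_k^i$ is not in the paper and is a genuinely different argument: it shows directly that $1-\sum_{j=0}^N\alpha_k^j=\prod_{i=1}^{N+1}\beta_k^i$, so the event in question is simply $\{\beta_k^1=\cdots=\beta_k^{N+1}=1\}$, and independence gives $p^{N+1}$ without ever appealing to Lemma~\ref{Lemma 1} or summing a series. This is arguably cleaner and more informative, since it identifies the packet-drop event explicitly in terms of the underlying Bernoulli variables rather than deducing its probability by complementation.
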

\begin{proof}
	Using basic principle of probability, we can write
	\begin{align}
	\begin{split}
	P\left(\sum_{j=0}^N\alpha_k^j=0\right)&=1-P\left(\sum_{j=0}^N\alpha_k^j=1\right)\\
	&=1-\sum_{j=0}^N P(\alpha_k^j=1).
	\end{split}
	\end{align}
	Using Lemma \ref{Lemma 1}, we get
	\begin{align}\label{eq:14}
	\begin{split}
	P\left(\sum_{j=0}^N\alpha_k^j=0\right)&=1-\sum_{j=0}^N(p^j)(1-p)\\
	&=p^{N+1}.
	\end{split}
	\end{align}
\end{proof}
\textit{Remark 3}:
It can be observed from \eqref{eq:14} that for a high value of $p$, $N$ should be kept sufficiently large to reduce the probability of a packet being lost.

To design the filter for system \eqref{eq:1} and \eqref{eq:3}, we have made some assumptions as follows.
\begin{assumption}
	The state vector $x_{k}$ follows the first order Markov process, \textit{i.e.} $P(x_{k}|x_{1:k-1},y_{1:k})=P(x_{k}|x_{k-1})$, and the received measurement $y_k$, conditionally on $x_{k-N:k}$, is independent of state vectors $x_{1:k-1}$ \textit{i.e.}, $P(y_k|x_{1:k})=P(y_k|x_{k-N:k})$.
\end{assumption}
\begin{assumption}
	There is no correlation between the measurement noises received at two different time steps on the account of random delays and packet loss in \eqref{eq:3}, \textit{i.e.}, $E[v'_jv'_k]_{j\neq k}=0,$ where $v'_j$ is the measurement noise received along with measurement $y_j$ at $j^{th}$ time step. 
\end{assumption}
\begin{lemma}
	Recursion equation of importance weight $w_k^i$ for model \eqref{eq:1} and \eqref{eq:3}, can be obtained as
	\begin{equation}\label{eq:15}
	w_k^i=w_{k-1}^i \dfrac{P(y_k|x_{k-N:k}^i)P(x_k^i|x_{k-1}^i)}{\textup{q}(x_k^i|x_{1:k-1}^i,y_{1:k})},
	\end{equation}
	where $x_k^i$ is drawn from proposal density $\textup{q}(x_k|x_{1:k-1}^i,y_{1:k})$.
\end{lemma}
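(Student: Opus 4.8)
The plan is to derive \eqref{eq:15} by mimicking the standard sequential-importance-sampling argument that produced \eqref{eq:9}, but with the true (delayed) observation $y_k$ in place of the idealized $z_k$. The key realization is that the entire derivation of \eqref{eq:7}--\eqref{eq:9} only used two structural facts: the Markov property of the state and the conditional independence of the current observation from past states given the states it actually depends on. Under Assumption~1 those two facts hold verbatim with $y_k$ replacing $z_k$, except that $y_k$ now depends on the window $x_{k-N:k}$ rather than on $x_k$ alone.

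First I would write the posterior factorization analogous to \eqref{eq:7}:
\begin{align}\label{eq:post-y}
\begin{split}
P(x_{0:k}|y_{1:k})&=\dfrac{P(y_k|x_{0:k},y_{1:k-1})P(x_{0:k}|y_{1:k-1})}{P(y_k|y_{1:k-1})}\\
&\propto P(y_k|x_{k-N:k})P(x_k|x_{k-1})P(x_{0:k-1}|y_{1:k-1}),
\end{split}
\end{align}
where $P(y_k|y_{1:k-1})$ is the normalizing constant, the first simplification in the numerator uses $P(y_k|x_{0:k},y_{1:k-1})=P(y_k|x_{k-N:k})$ from Assumption~1, and the second uses the Markov decomposition $P(x_{0:k}|y_{1:k-1})=P(x_k|x_{k-1})P(x_{0:k-1}|y_{1:k-1})$. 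Then I would decompose the proposal density exactly as in \eqref{eq:8}, $\textup{q}(x_{0:k}|y_{1:k})=\textup{q}(x_k|x_{0:k-1},y_{1:k})\,\textup{q}(x_{0:k-1}|y_{1:k-1})$, substitute both into the definition of the normalized weight \eqref{eq:6} evaluated at the particle trajectory $x_{0:k}^i$, and recognize the factor $P(x_{0:k-1}^i|y_{1:k-1})/\textup{q}(x_{0:k-1}^i|y_{1:k-1})=w_{k-1}^i$. Collecting the remaining factors gives
\begin{equation*}
w_k^i\propto w_{k-1}^i\,\dfrac{P(y_k|x_{k-N:k}^i)\,P(x_k^i|x_{k-1}^i)}{\textup{q}(x_k^i|x_{1:k-1}^i,y_{1:k})},
\end{equation*}
which is \eqref{eq:15} after absorbing the $k$-dependent normalizing constant into the proportionality (and noting $\textup{q}(x_k|x_{0:k-1},\cdot)=\textup{q}(x_k|x_{1:k-1},\cdot)$ since $x_0$ is subsumed).

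The subtle point — the part I would be most careful about — is justifying the step $P(y_k\mid x_{0:k},y_{1:k-1})=P(y_k\mid x_{k-N:k})$, i.e.\ that conditioning additionally on the \emph{past received measurements} $y_{1:k-1}$ does not change the likelihood once we condition on the state window $x_{k-N:k}$. Assumption~1 as stated only gives $P(y_k|x_{1:k})=P(y_k|x_{k-N:k})$; to get the measurement history out we additionally need Assumption~2 (no correlation, effectively independence, between the noise realizations carried by $y_k$ and by earlier $y_j$), together with the independence of the delay indicators $\beta_k^i$ across time, so that the randomness generating $y_k$ from the clean measurements $z_{k-N:k}$ is independent of whatever generated $y_{1:k-1}$. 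I would spell out that $y_k=\sum_{j=0}^N\alpha_k^j z_{k-j}+(1-\sum_j\alpha_k^j)y_{k-1}$ from \eqref{eq:3}, that given $x_{k-N:k}$ the clean measurements $z_{k-j}=h_{k-j}(x_{k-j},k-j)+v_{k-j}$ have densities depending only on those states and the (white, mutually independent) noises $v_{k-j}$, and that the $\alpha_k^j$ depend only on the current-step Bernoulli draws; hence the only route by which $y_{1:k-1}$ could matter is through the $y_{k-1}$ term, which is already measurable with respect to the conditioning $\sigma$-algebra. After that, \eqref{eq:post-y} and the weight recursion follow by the same bookkeeping as in the classical case, so the remainder is routine.
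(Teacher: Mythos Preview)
Your proof is correct and follows essentially the same route as the paper's: the paper factors the \emph{joint} density $P(x_{1:k},y_{1:k})$ rather than the posterior $P(x_{0:k}\mid y_{1:k})$, but the Bayesian decomposition, the invocation of Assumption~1 to reduce $P(y_k\mid x_{1:k},y_{1:k-1})$ to $P(y_k\mid x_{k-N:k})$, and the resulting weight recursion are identical. Your discussion of the ``subtle point'' (that Assumption~2 and the i.i.d.\ structure of the $\beta_k^i$ are implicitly needed to drop the $y_{1:k-1}$ conditioning) is in fact more careful than the paper, which simply cites Assumption~1 at that step; the paper also mentions, as an aside, that the lemma can alternatively be obtained by augmenting the state to $\bar x_k=[x_k,\ldots,x_{k-N}]^T$ and applying \eqref{eq:9} directly.
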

\begin{proof}
	Using Bayesian theorem, proposal density can be decomposed as
	\begin{align}\label{eq:16}
	\begin{split}
	\textup{q}(x_{0:k}|y_{1:k}) &=\textup{q}(x_k|x_{1:k-1},y_{1:k})\textup{q}(x_{1:k-1}|y_{1:k})\\
	&=\textup{q}(x_k|x_{1:k-1},y_{1:k})\textup{q}(x_{1:k-1}|y_{1:k-1}). 
	\end{split}
	\end{align}
	 Particles $x_k^i$ and $x_{1:k-1}^i$ can be sampled from $\textup{q}(x_k|x_{1:k-1},y_{1:k})$ and $\textup{q}(x_{1:k-1}|y_{1:k-1})$, respectively. Again, using Bayesian rule, joint pdf, $P(x_{1:k},y_{1:k})$, can be considered to be decomposed as below.
	\begin{align}\label{eq:17}
	\begin{split}
	P(x_{1:k},&y_{1:k})\\
	&=P(y_k|x_k,x_{1:k-1},y_{1:k-1})P(x_k,x_{1:k-1},y_{1:k-1})\\
	&=P(y_k|x_k,x_{1:k-1},y_{1:k-1})P(x_k|x_{1:k-1},y_{1:k-1})\\& \times P(x_{1:k-1},y_{1:k-1}).
	\end{split}
	\end{align}
	By Assumption 1, \eqref{eq:17} can be rewritten as 
	\begin{equation}\label{eq:18}
	P(x_{1:k},y_{1:k})=P(y_k|x_{k-N:k})P(x_k|x_{k-1})P(x_{1:k-1},y_{1:k-1}).
	\end{equation}
	Using \eqref{eq:16} and \eqref{eq:18}, the importance weight can be written as
	\begin{align}\label{eq:19}
	\begin{split}
	w_k 
	&=\dfrac{P(y_k|x_{k-N:k})P(x_k|x_{1:k-1})}{\textup{q}(x_k|x_{1:k-1},y_{1:k})} \dfrac{P(x_{1:k-1},y_{1:k-1})}{\textup{q}(x_{1:k-1}|y_{1:k-1})}\\
	&=w_{k-1}\dfrac{P(y_k|x_{k-N:k})P(x_k|x_{1:k-1})}{\textup{q}(x_k|x_{1:k-1},y_{1:k})}.
	\end{split} 
	\end{align}
	Now, with the help of  \eqref{eq:19}, $w_k^i$ can be finally written as \eqref{eq:15}.
	\end{proof}
	
	\textit{Note:}  Alternatively, Lemma 3 can be proved by augmenting state vector $x_k$ into $\bar{x_k} =[x_k,x_{k-1},\cdots,x_{k-N}]^T$ as new state vector of the same system. Using Assumptions 1 and 2, Eq. \eqref{eq:9} when expressed in terms of state vector $\bar{x}_k^i$ and received measurement $y_k$ in place of $x_k^i$ and $z_k$ respectively, can be simplified as \eqref{eq:15}.
	
	\begin{lemma} Likelihood density $P(y_k|x_{k-N:k}^i)$ can be computed recursively as
	\begin{align}\label{eq:20}
	\begin{split}
	P(y_k|x_{k-N:k}^i)=\sum_{j=0}^N p^j(1-p)P_{v_{k-j}}(y_k-h_{k-j}(x_{k-j}^i))\\ +
	p^{N+1} P(y_{k-1}|x_{k-1-N:k-1}^i),
	\end {split}
	\end{align}
	where $P_{v_{k-j}}(\boldsymbol{\cdot})$ represents the pdf of measurement noise $v_{k-j}$.
\end{lemma}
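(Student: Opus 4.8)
The plan is to expand $P(y_k\mid x_{k-N:k}^i)$ by the law of total probability over the mutually exclusive and exhaustive family of events generated by the delay variables at step $k$: the events $A_j=\{\alpha_k^j=1\}$ for $j=0,1,\dots,N$, together with $A_\ast=\{\sum_{j=0}^N\alpha_k^j=0\}$. By \eqref{eq:3} exactly one of these events occurs at instant $k$, and each one completely determines which physical quantity $y_k$ is equal to. Since the Bernoulli variables $\beta_k^i$ (hence the $\alpha_k^j$) are generated independently of the state trajectory, $P(A_j\mid x_{k-N:k}^i)=P(A_j)$ and $P(A_\ast\mid x_{k-N:k}^i)=P(A_\ast)$, and Lemmas \ref{Lemma 1} and \ref{Lemma 2} supply $P(A_j)=p^j(1-p)$ and $P(A_\ast)=p^{N+1}$.

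It then remains to evaluate the conditional likelihood on each event. On $A_j$ one has $y_k=z_{k-j}=h_{k-j}(x_{k-j}^i)+v_{k-j}$; by Assumption 2 the noise actually delivered, $v_{k-j}$, is uncorrelated with (and taken independent of) the other noises and of the states, so its conditional density given $x_{k-j}^i$ is simply its own pdf, giving $P(y_k\mid x_{k-N:k}^i,A_j)=P_{v_{k-j}}\!\left(y_k-h_{k-j}(x_{k-j}^i)\right)$. On $A_\ast$ the buffer repeats its previous output, $y_k=y_{k-1}$, so $P(y_k\mid x_{k-N:k}^i,A_\ast)=P(y_{k-1}\mid x_{k-N:k}^i)$. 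Multiplying each conditional density by the corresponding probability and summing yields \eqref{eq:20}, provided the last term is identified with $P(y_{k-1}\mid x_{k-1-N:k-1}^i)$.

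That identification is the step I expect to require the most care. The quantity $P(y_{k-1}\mid x_{k-1-N:k-1}^i)$ is exactly what was computed at the previous time step, so to close the recursion one must argue $P(y_{k-1}\mid x_{k-N:k}^i)=P(y_{k-1}\mid x_{k-1-N:k-1}^i)$. This follows from the conditional-independence part of Assumption 1 — namely that $y_{k-1}$, conditioned on $x_{(k-1)-N:k-1}$, is independent of the remaining states — together with the first-order Markov property of $\{x_k\}$, which lets the extra conditioning variable $x_k$ be dropped and makes states earlier than $x_{k-1-N}$ irrelevant once $x_{k-1-N}$ (equivalently, through the chain, $x_{k-N}$) is fixed. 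I would spell this reduction out explicitly and then just collect the terms; the remaining algebra is immediate. (Equivalently, one can run the same total-probability argument on the augmented state $\bar{x}_k=[x_k,\dots,x_{k-N}]^T$ mentioned after Lemma~3, where the reduction is built into the state definition.)
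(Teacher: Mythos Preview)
Your proposal is correct and follows essentially the same total-probability decomposition as the paper: the paper introduces a binary indicator $\gamma_k$ for ``some measurement received'' and then implicitly splits the $\gamma_k=1$ case by the delay index $j$, which amounts to exactly the partition $\{A_j\}_{j=0}^N\cup\{A_\ast\}$ you use directly. Your version is in fact slightly cleaner, and you are more explicit than the paper about justifying the identification $P(y_{k-1}\mid x_{k-N:k}^i)=P(y_{k-1}\mid x_{k-1-N:k-1}^i)$, which the paper simply asserts in its equation~\eqref{eq:27}.
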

\begin{proof}
	In \eqref{eq:19},  the measurement $y_k$ may be correlated with $x_{k-j}\ (j=0,1,\cdots,N)$ if delay occurs.
	Let $\gamma_k$ be a Bernoulli random variable that denotes a measurement has been received (with any number of step delay between 0 and $N$). Now, using Lemma \ref{Lemma 2}, probability that a measurement has been received with admissible delay, is
	\begin{equation}\label{eq:21}
	P(\gamma_k=1)=\sum_{j=0}^N p^j(1-p).
	\end{equation} Consequently, probability that no measurement (\textit{i.e.}, no of delay step is more than $N$ or any other factor for packet loss ) has been received, is 
	\begin{equation}\label{eq:22}
	P(\gamma_k=0)=1-\sum_{j=0}^N p^j(1-p)=p^{N+1}.
	\end{equation}
	Assuming no correlation between $\gamma_k$ and the state vectors, marginalization of likelihood density from its joint density can be evaluated as
	\begin{align}
	\begin{split}
	P(y_k|&x_{k-N:k})\\&=\sum_{\gamma_k=0}^1 P(y_k,\gamma_k|x_{k-N:k})\\
	&=\sum_{\gamma_k=0}^1 P(y_k|\gamma_k,x_{k-N:k}) P(\gamma_k|x_{k-N:k})\\
	&=\sum_{\gamma_k=0}^1 P(y_k|\gamma_k,x_{k-N:k})P(\gamma_k)\\
	&= P(y_k|\gamma_k=0,x_{k-n:k})P(\gamma_k=0)\\&+P(y_k|\gamma_k=1,x_{k-N:k})P(\gamma_k=1).
	\end{split}
	\end{align} 
	Again, using \eqref{eq:21} and \eqref{eq:22}, we have
	\begin{equation}\label{eq:24}
	\begin{split}
	P(y_k|x_{k-N:k})&=\sum_{j=0}^N( p^j(1-p)P(y_k|\gamma_k=1,x_{k-N:k}))+\\ &p^{N+1}P(y_k|\gamma_k=0,x_{k-N:k}).
	\end{split}
	\end{equation}
	Now, we can expand \eqref{eq:3} using \eqref{eq:2} and write as
	\begin{equation}\label{eq:25}
	y_k=\sum_{j=0}^N(\alpha_k^j(h_{k-j}(x_{k-j})+v_{k-j})) +(1-\sum_{j=0}^N\alpha_k^j)y_{k-1}.
	\end{equation}
	If $\gamma_k=1$ (\textit{i.e.}, one of $\alpha_k^j(0\leq j \leq N)$ is 1),  \eqref{eq:25} can be rewritten as $y_k= h_{k-j}(x_{k-j})+v_{k-j}$ and consequently,
	\begin{equation}\label{eq:26}
	P(y_k|\gamma_k=1,x_{k-N:k})=P_{v_{k-j}}(y_k-h_{k-j}(x_{k-j})).
	\end{equation}
	Similarly, if $\gamma_k=0$ (\textit{i.e.}, all of $\alpha_k^j(0\leq j \leq N)$ are 0), \eqref{eq:25} can be rewritten as $y_k=y_{k-1}$ and
	\begin{equation}\label{eq:27}
	P(y_k|\gamma_k=0,x_{k-N:k})=P(y_{k-1}|x_{k-1-N:k-1}).
	\end{equation}
	Now, substituting \eqref{eq:26} and \eqref{eq:27} into \eqref{eq:24} and writing it for the particles (\textit{i.e.}, $P(y_k|x_{k-N:k}^i)$),  we can easily obtain \eqref{eq:20}.
\end{proof}
\textit{Remark 4}:
It can be observed that if there are no random delays and packet drops in the received measurements (\textit{i.e}, $N=0$ and $p=0$),  \eqref{eq:20} converges to a likelihood density $P_{v_k}(z_k-h_k(x_k^i))$ of a standard PF.
 
In this work, to reduce the effect of degeneracy in iterative updating of importance weight of particles, resampling is chosen to be done at each step. Generally, we choose prior density $P(x_k|x_{k-1})$ as a proposal density $q(x_k|x_{1:k-1},y_{1:k})$ to implement sequential importance resampling (SIR) PF \cite{arulampalam2002tutorial}. However, posterior density obtained from a standard non-linear filter can also be used as proposal density function.
\begin{algorithm} 
	\caption{Modified Particle Filter}\label{algo:1}
	\begin{center}
		$[\{x_k^i,w_k^i\}_{i=1}^{ns}]:= \texttt{SIR}[\{x_{k-N:k}^i,w_{k-1}^i\}_{i=1}^{ns}, \hat{p},y_k]$
	\end{center}
	\begin{itemize}
		\item \textit{for} $i=1: ns$
		\begin{itemize}
			\item \texttt{Draw} $x_k^i \sim P(x_k|x_{k-1}^i)$
			\item \texttt{Compute the importance weight $w_k^i$ according to equations \eqref{eq:15} and \eqref{eq:20}}
			\item \texttt{Normalize the importance weight}:$\; w_k^i:=w_k^i/\texttt{SUM}[\{w_k^i\}_{i=1}^{ns}]$
		\end{itemize}
		\item \textit{end for}
		\item \texttt{Resample the particles at each step} 
		\begin{itemize}
			\item $[\{x_k^i, w_k^i\}_{i=1}^{ns}]:=\texttt{RESAMPLE}[\{x_k^i, w_k^i\}_{i=1}^{ns}]$
		\end{itemize}
	\end{itemize}
\end{algorithm}


\subsection{Convergence of the PF for randomly delayed measurements}\label{subIIIC}
In this subsection, we are going to explore the conditions for convergence of the modified PF derived for randomly delayed measurements. A PF can be said to be converging if its empirical approximation admits a mean square error of order $1/ns$ at each step of filtering \cite{crisan2002survey}. According to \cite{crisan2002survey}, prime requisite for simple convergence is that likelihood function $P(z_k|\boldsymbol{\cdot})$ should be bounded, \textit{i.e.} $\|P(z_k|\boldsymbol{\cdot)}\|<\infty$, for all $x_k \in \Re^{n_x}$. Following lemma will make sure that this requisite holds for our case.
\begin{lemma}
	If, for non-delayed measurement $z_k$ 
	\begin{equation}
	\|P(z_k|x_k)\| < \infty,\quad \forall  x_k \in \Re^{n_x},
	\end{equation}
	then
	\begin{equation}\label{eq:29}
	 \|P(y_k|x_{k-N:k})\|< \infty,\quad \forall x_k \in \Re^{n_x}.
	\end{equation}
\end{lemma}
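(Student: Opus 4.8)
The plan is to build directly on the recursion \eqref{eq:20} established in the previous lemma. Writing $\|g\|:=\sup_{x}|g(x)|$ for the sup-norm in question, observe that each term $P_{v_{k-j}}(y_k-h_{k-j}(x_{k-j}^i))$ appearing in \eqref{eq:20} has exactly the functional form of a non-delayed likelihood $P(z_{k-j}|x_{k-j})=P_{v_{k-j}}(z_{k-j}-h_{k-j}(x_{k-j}))$, merely with the received value $y_k$ substituted for $z_{k-j}$. Hence the hypothesis $\|P(z_k|x_k)\|<\infty$ (which is really the statement that $x\mapsto P_{v}(c-h(x))$ is bounded for every fixed constant $c$) yields, for each $j$, a finite constant $M_{k,j}:=\sup_{x}P_{v_{k-j}}(y_k-h_{k-j}(x))<\infty$; put $M_k:=\max_{0\le j\le N}M_{k,j}<\infty$. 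The key structural fact, supplied by Lemmas \ref{Lemma 1} and \ref{Lemma 2}, is that the coefficients in \eqref{eq:20} satisfy $\sum_{j=0}^N p^j(1-p)+p^{N+1}=1$, so \eqref{eq:20} exhibits $P(y_k|x_{k-N:k}^i)$ as a convex combination of $N+1$ quantities bounded by $M_k$ together with the single recursive term $P(y_{k-1}|x_{k-1-N:k-1}^i)$.

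I would then argue by induction on $k$. For the base case, at the first time step the estimator buffer is empty, so $y$ coincides with the genuine measurement $z$ and the likelihood reduces to $P(z|x)$, which is bounded by assumption. For the inductive step, suppose $\|P(y_{k-1}|\cdot)\|<\infty$; then \eqref{eq:20} gives
\begin{align*}
\|P(y_k|x_{k-N:k})\| &\le M_k\sum_{j=0}^N p^j(1-p)+p^{N+1}\,\|P(y_{k-1}|\cdot)\|\\
&=M_k\,(1-p^{N+1})+p^{N+1}\,\|P(y_{k-1}|\cdot)\|,
\end{align*}
which is finite, being a finite combination of finite numbers. Thus $\|P(y_k|x_{k-N:k})\|<\infty$ at every step $k$, which is precisely \eqref{eq:29}. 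If in addition the noise densities are uniformly bounded in time, say $\sup_k M_k\le\bar M<\infty$, then choosing a bound $B\ge\bar M$ in the recursion gives $\|P(y_k|\cdot)\|\le B(1-p^{N+1})+p^{N+1}B=B$, i.e. the uniform estimate $\|P(y_k|\cdot)\|\le\max(\bar M,\|P(y_1|\cdot)\|)$ for all $k$.

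The only genuinely delicate point is the recursive term $p^{N+1}P(y_{k-1}|\cdot)$: one needs a base case, and, for a bound uniform in $k$, a mechanism preventing the bound from growing — this is exactly the contraction factor $p^{N+1}<1$ available for $p\in[0,1)$ (the degenerate case $p=1$ is trivial, since then every $\alpha_k^j$ vanishes almost surely, $y_k\equiv y_{k-1}\equiv\cdots$, and the bound is inherited verbatim from the base case). The remaining work is purely bookkeeping: identifying $P_{v_{k-j}}(y_k-h_{k-j}(\cdot))$ with a non-delayed likelihood and invoking the hypothesis to bound it, after which the estimate above is elementary. This lemma then supplies the boundedness prerequisite required for the convergence analysis of \cite{crisan2002survey}.
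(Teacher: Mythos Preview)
Your argument is correct and follows the same overall strategy as the paper: identify each $P_{v_{k-j}}(y_k-h_{k-j}(\cdot))$ with a non-delayed likelihood, invoke the hypothesis to bound it, and then control the right-hand side of the recursion \eqref{eq:20}. The differences are in the details. The paper first rearranges \eqref{eq:20} as $P(y_k|\cdot)-p^{N+1}P(y_{k-1}|\cdot)=\sum_j p^j(1-p)P_{v_{k-j}}(\cdot)$, appeals to stationarity of the white noise to carry the bound from $P_{v_k}$ to $P_{v_{k-j}}$, and then uses the crude estimate $p^j(1-p)<1$ to bound the sum by (essentially) $N$ times a single noise-density value, after which it simply asserts that \eqref{eq:29} follows. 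Your version is tighter and more complete: you exploit the fact that the coefficients $p^j(1-p)$ and $p^{N+1}$ sum to $1$ to read \eqref{eq:20} as a convex combination, and you close the argument with an explicit induction on $k$ that the paper leaves implicit. This also lets you extract a uniform-in-$k$ bound via the contraction factor $p^{N+1}<1$, which the paper does not discuss. In short, both proofs are the same idea, but yours actually handles the recursive term $p^{N+1}P(y_{k-1}|\cdot)$ rather than leaving it to the reader.
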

\begin{proof}
	Using \eqref{eq:2}, we can write
	\begin{equation*}
	P(z_k|x_k)=P_{v_k}(z_k-h_k(x_k)).
	\end{equation*}	
	Thus, pdf of white noise $v_k$, \textit{i.e.} $P_{v_k}(z_k-h_k(x_k))$ is bounded for all its real-valued inputs. Now, rearranging terms of Eq. \eqref{eq:20} on both side, we have
	\begin{equation}\label{eq:30}
	\begin{split}
	P(y_k|x_{k-N:k})-p^{N+1}P(y_{k-1}|x_{k-1-N:k-1})\\=\sum_{j=0}^N p^j(1-p)
	P_{v_{k-j}}(y_k-h_{k-j}(x_{k-j})).
	\end{split}
	\end{equation}
Now, being a white noise, $v_k$ is a stationary process. Hence, its pdf is not affected by time shift, \textit{i.e.} if $P_{v_k}(\boldsymbol{\cdot})$ is bounded, $P_{v_{k-j}}(\boldsymbol{\cdot})$ must be bounded. Again, since $p^j(1-p)<1$ for all values of $j$, we can write
	\begin{equation}\label{eq:31}
	\begin{split}
	\sum_{j=0}^N p^j(1-p)P_{v_{k-j}}(y_k-h_{k-j}(x_{k-j}))\\ \leq N P_{v_{k-j}}(y_k-h_{k-j}(x_{k-j})).
	\end{split}
	\end{equation}
	Given $N$ is a finite number, \eqref{eq:30} and \eqref{eq:31} can easily be used to establish \eqref{eq:29}.
\end{proof}

Now, Theorem 2 of \cite{crisan2002survey}, for any $\Phi \in B(\Re^{n_x})$, where $B$ denotes the Borel set, can be expressed as
\begin{equation*}
\begin{split}
E[((P^{ns}(y_k|x_{k-N:k}),\Phi)&-(P(y_k|x_{k-N:k}),\Phi))^2]\\& \leq c_{k|k}\dfrac{ \|\Phi\|^2}{ns},
\end{split}
\end{equation*}
where $(\nu, \phi)$ is defined as \begin{equation*}
(\nu, \phi)=\int \phi\nu.
\end{equation*}
 Here, $\nu $ is a probability density and $\phi$ is a Borel set. $P^{ns}(y_k|x_{k-N:k})$ is empirical approximation of $P(y_k|x_{k-N:k})$ given by \eqref{eq:20} using $ns$ support points. Theorem 3 of \cite{crisan2002survey} suggests that $c_{k|k}$ is a constant which is independent of number of particles $ns$, but represents the dependency of mixed dynamics of system on initial conditions or past values. That is, if optimal filter associated with the dynamics of system has long memory, $c_{k|k}$ will go on accumulating the mean square error with each step. Unfortunately, in our case, the filter does have long memory due to arbitrary delays in measurements. Therefore, in order to avoid the large mean square error of convergence, value of $N$ should be kept small. On the other hand, to counter the increase in value of $c_{k|k}$, number of particles $ns$ needs to be large.

\textit{Remark 5}:
By Lemma \ref{Lemma 2}, we observe that to reduce the possibility of information loss, $N$ should be large if $p$ is high. But, from the above discussion, high value of  $N$ can lead to large convergence error. Thus, we need to maintain a balance between these two situations in selecting a value of $N$ if randomness in delay is more likely.

\section{Identification of latency probability}
In practice, when a set of randomly delayed measurements is given, we may not have knowledge about channel properties and its random parameters. Hence, latency probability which is required to design the filter can be unknown. Here, we present ML criterion to identify the unknown latency probability for received measurements. This method involves maximization of the joint density $P_p(y_{1:m})$ of received measurements that is a function of latency parameter $p$, which can mathematically be represented as \cite{zhang2016particle}
\begin{equation}\label{eq:32}
\hat{p}=arg\ \underset{p\in[0,1]}{max} P_p(y_1,\cdots,y_m),
\end{equation} 
where $m$ is number of measurements taken for identification of parameter $p$ and $\hat{p}$ is estimated value of $p$.
Now, we assume that the first received measurement $y_1$ is independent of parameter $p$ and is equal to $z_1$. Again, using Bayesian theorem the above joint pdf can be reformulated as  
\begin{equation}\label{eq:33}
P_p(y_1,\cdots,y_m)=P(y_1)\prod_{k=2}^m P_p(y_k|y_{1:k-1}).
\end{equation}  

Sometimes, for computational advantage above maximization of likelihood is expressed in terms of log-likelihood (LL), as log is a monotonic function it does not affect the process. The LL of \eqref{eq:33} can be formulated as
\begin{align}\label{eq:34}
\begin{split}
L_p(y_{1:m})&=\log P_p(y_{1:m})\\
&=\log P(y_1)+\sum_{k=2}^m\log P_p(y_k|y_{1:k-1}),
\end{split}
\end{align} 
where $L_p(y_{1:m})$ is LL function for the received measurements. Now, to solve the maximization  problem of \eqref{eq:32}, two things need to be done in order. First, computation of likelihood $P_p(y_k|y_{1:k-1})$ and other, the maximization of $L_p(y_{1:m})$.
\subsection{Computation of likelihood density}
\begin{lemma}
	The likelihood density function $P_p(y_k|y_{1:k-1})$ of randomly delayed measurements can be approximated as
	\begin{equation}\label{eq:35}
	P_p(y_k|y_{1:k-1})=\dfrac{1}{ns}\sum_{i=1}^{ns}P_p(y_k|x_{k-N:k}^i),
	\end{equation}
	where $P_p(y_k|x_{k-N:k}^i)$ is given in \eqref{eq:20}.
\end{lemma}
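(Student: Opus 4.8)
The plan is to derive \eqref{eq:35} by marginalizing the likelihood over the augmented state vector and then replacing the resulting one-step predictive density by its empirical particle approximation. First I would apply the total-probability rule over $x_{k-N:k}$ and invoke Assumption~1 (so that, conditioned on $x_{k-N:k}$, the measurement $y_k$ depends on $y_{1:k-1}$ only through the recursion already embedded in \eqref{eq:20}), obtaining
\begin{equation*}
P_p(y_k|y_{1:k-1}) = \int P_p(y_k|x_{k-N:k})\, P(x_{k-N:k}|y_{1:k-1})\, dx_{k-N:k},
\end{equation*}
where $P_p(y_k|x_{k-N:k})$ is exactly the likelihood computed recursively in Lemma~6, i.e. \eqref{eq:20}.

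Next I would use the structure of Algorithm~\ref{algo:1}. Because resampling is carried out at every step, at the beginning of step $k$ the particles carry uniform weights $w_{k-1}^i = 1/ns$ and $\{x_{k-1-N:k-1}^i\}_{i=1}^{ns}$ is an equally weighted empirical approximation of $P(x_{k-1-N:k-1}|y_{1:k-1})$. Drawing $x_k^i \sim P(x_k|x_{k-1}^i)$ as in the SIR scheme, the propagated augmented samples $\{x_{k-N:k}^i\}_{i=1}^{ns}$ are, prior to the weight update, distributed according to the predictive law $\int P(x_k|x_{k-1})\,P(x_{k-1-N:k-1}|y_{1:k-1})\,dx_{k-1} = P(x_{k-N:k}|y_{1:k-1})$, so that
\begin{equation*}
P(x_{k-N:k}|y_{1:k-1}) \approx \frac{1}{ns}\sum_{i=1}^{ns}\delta\!\left(x_{k-N:k} - x_{k-N:k}^i\right).
\end{equation*}

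Substituting this empirical measure into the integral and using the sifting property of the Dirac delta immediately yields $P_p(y_k|y_{1:k-1}) \approx \tfrac{1}{ns}\sum_{i=1}^{ns} P_p(y_k|x_{k-N:k}^i)$, which is \eqref{eq:35}; each summand $P_p(y_k|x_{k-N:k}^i)$ is then evaluated through the recursion \eqref{eq:20}. I expect the only delicate point to be the second step: justifying that, thanks to resampling at each step together with the use of the transition density as proposal, the propagated particles form (asymptotically in $ns$) an \emph{unweighted} sample from $P(x_{k-N:k}|y_{1:k-1})$ --- for a general proposal one would instead retain the importance weights $w_{k-1}^i$ and arrive at a weighted sum. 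It should also be stressed that \eqref{eq:35} is a Monte Carlo approximation, the equality being understood in the limiting/empirical sense underpinned by the convergence analysis of Section~\ref{subIIIC}.
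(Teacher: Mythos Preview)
Your proposal is correct and follows essentially the same route as the paper: marginalize $P_p(y_k|y_{1:k-1})$ over $x_{k-N:k}$, invoke Assumption~1 to reduce the first factor to $P_p(y_k|x_{k-N:k})$, replace the predictive density $P(x_{k-N:k}|y_{1:k-1})$ by its equally weighted particle approximation, and use the sifting property. The only cosmetic difference is that the paper explicitly decomposes $P_p(x_{k-N:k}|y_{1:k-1})$ via the chain rule into the product $P(x_k|x_{k-1})\cdots P_p(x_{k-N}|y_{1:k-N})$ before approximating each factor by particles, whereas you argue directly that resampling plus propagation by the prior yields an unweighted sample from the joint predictive law; your justification of the uniform $1/ns$ weights is in fact somewhat cleaner than the paper's.
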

\begin{proof}
	We can express the likelihood density $P_p(y_k|y_{1:k-1})$ as marginal density of a joint pdf that includes delayed measurement and previous states that are correlated, using Bayesian theorem and total probability, that is
	\begin{align} \label{eq:36}
	\begin{split}
	&P_p(y_k|y_{1:k-1})\\
	&=\idotsint P_p(y_k,x_{k-N:k}|y_{1:k-1})dx_k\cdots dx_{k-N}\\
	&=\idotsint P_p(y_k|x_{k-N:k},y_{1:k-1}) P_p(x_{k-N:k}|y_{1:k-1})dx_k\cdots \\&\times dx_{k-N}\\
	&=\idotsint P_p(y_k|x_{k-N:k})P_p(x_{k-N:k}|y_{1:k-1})dx_k\cdots dx_{k-N}.
	\end{split}
	\end{align}
	Again, using Bayes' rule we can write
	\begin{align*} 
	\begin{split}
	P_p(x_{k-N:k}|y_{1:k-1})=P_p(x_k|x_{k-N:k-1},y_{1:k-1})P_p(x_{k-N:k-1})\\
	=P_p(x_k|x_{k-N:k-1},y_{1:k-1})P_p(x_{k-1}|x_{k-N:k-2},y_{1:k-1})\times\\\cdots \times P_p(x_{k-N}|y_{1:k-1}).
	\end{split}
	\end{align*}
	As state prior density functions are independent of measurements, they are not function of latency probability and again, by using Assumption 1 and \eqref{eq:5}, we can write joint pdf
	\begin{align} \label{eq:37}
	\begin{split}
	&P_p(x_{k-N:k}|y_{1:k-1})=P(x_k|x_{k-1})P(x_{k-1}|x_{k-2})\cdots \\&\times P_p(x_{k-N}|y_{1:k-N}),\\
	&\approx \dfrac{1}{ns}\sum_{i=1}^{ns}\delta[x_k-x_k^i]\delta[x_{k-1}-x_{k-1}^i]\cdots\delta[x_{k-N}-x_{k-N}^i],
	\end{split}
	\end{align}
	where $x_{k-N}^i,x_{k-N+1}^i,\cdots,$ and $x_k^i$ are drawn from $P_p(x_{k-N}|y_{1:k-N}),P(x_{k-N+1}|x_{k-N}),\cdots,$ and $P(x_k|x_{k-1})$ respectively, and normalized importance weight of particles are $1/ns$. It is to be noted that particles being drawn from prior density are independent of measurements and hence from latency probability. Now, substitute \eqref{eq:37} into \eqref{eq:36} and $P_p(y_k|y_{1:k-1})$ can be approximately computed as
	\begin{equation*}
	P_p(y_k|y_{1:k-1})=\dfrac{1}{ns}\sum_{i=1}^{ns}P_p(y_k|x_k^i,\cdots,x_{k-N}^i),
	\end{equation*}
	where $P_p(y_k|x_{k-N:k}^i)$ is given in \eqref{eq:20}.
\end{proof}
Algorithm \ref{algo:2} illustrates the steps for computation of LL function.
\subsection{Maximization of log-likelihood function}
Substituting \eqref{eq:35} into \eqref{eq:34}, we can rewrite LL function \eqref{eq:34} as
\begin{equation}\label{eq:38}
L_p(y_{1:m})=\log P(y_1)+\sum_{k=2}^{m}\log\left(\dfrac{1}{ns}\sum_{i=1}^{ns}P_p(y_k|x_{k-N:k}^i)\right).
\end{equation}
$y_1$ is independent of parameter $p$ and can be neglected for the purpose of  maximization of likelihood density. Moreover, if proposal density $\textup{q}(x_k|x_{1:k-1}^i,y_{1:k})$ is prediction density function $P(x_k^i|x_{k-1}^i)$, then using \eqref{eq:15} we can rewrite the utility function \eqref{eq:38} as
\begin{equation}\label{eq:39}
L_p(y_{1:m})=\sum_{k=2}^m\log\left[\sum_{i=1}^{ns}w_{p,k}^i\right].
\end{equation}
Eq. \eqref{eq:39} can easily be maximized numerically over $p \in[0,1]$.

There are two ways to go for this numerical search of latency  parameter: offline and online identification. In the offline method, we can use more measurements for higher accuracy of parameter estimation. If in \eqref{eq:39}, value of $m$ is really high ($m\rightarrow \infty$) and parameter $p$ is varied in  very small increments ($sl$$ \rightarrow 0$) then, estimated latency probability can converge towards its true value. In practice, we may not afford to have that many measurements. Besides, it also will increase the computational burden. As identification of latency parameter is done only once at the beginning of filtering, computation time of offline mode is limited. Offline identification can be done only after receiving $m$ observations and algorithm should be run for multiple times for improved accuracy of estimated value. Algorithm \ref{algo:3} dictates the steps for offline identification.

In case of online identification, as we are estimating the latency parameter at each time step, initially we have too little information to extract the latency probability through maximization and hence very poor accuracy. However, accuracy of parameter value becomes comparable with that of offline identification with increase in time steps. The Advantage with this method is it provides latency probability at each step of state estimation for less computational effort, unlike offline method where identification is done after considering many measurements. Running average of estimated values at each step can be evaluated for improved accuracy of identified parameter. Algorithm \ref{algo:4} outlines this method of identification. 
\begin{algorithm} 
	\caption{Computation of log-likelihood function}\label{algo:2}
	\begin{center}
		[$L_p,\{x_k^i,w_{p,k}^i\}_{i=1}^{ns}]:= \texttt{LL}[L_p,\{x_k^i,w_{p,k}^i\}_{i=1}^{ns},y_k$]
	\end{center}
	\begin{itemize}
		\item \textit{for} ($i=1:ns$)
		\begin{itemize}
		\item \texttt{Draw $x_k^i \sim P(x_k|x_{k-1}^i)$}
		\item \texttt{Assign particle a weight}:
		\begin{equation*}
		\begin{split}
		w_{p,k}^i=\dfrac{1}{ns}(\sum_{j=0}^N (p^j(1-p)P_{v_{k-j}}(y_k-h_{k-j}(x_{k-j}^i))) +\\p^{N+1} P(y_{k-1}|x_{k-1-N:k-1}^i))
		\end{split}
		\end{equation*}
		\end{itemize}
		\item \textit{end for}
		\item \texttt{Compute the LL function}:
		\begin{equation*}
		L_p:=L_p+\log (\dfrac{1}{ns}\sum_{i=1}^{ns}w_{p,k}^i)
		\end{equation*}
		\item \textit{for} ($i=1:ns$)
		\begin{itemize}
		\item \texttt{Normalize the importance weight}: $w_{p,k}^i:=w_{p,k}^i/\texttt{SUM}[\{w_{p,k}^i\}_{i=1}^{ns}]$
		\end{itemize}
		\item \textit{end for}
		\item \texttt{Resample the drawn particles at each step}:
		\begin{enumerate}[]
		\item $[\{x_k^j,w_{p,k}^j\}_{j=1}^{ns}]:=\texttt{RESAMPLE}[\{x_k^i,w_{p,k}^i\}_{i=1}^{ns}]$
		\end{enumerate}
		
	\end{itemize}
\end{algorithm}

\begin{algorithm}
		\caption{Offline identification of latency probability}\label{algo:3}
		\begin{center}
			[$L_{max},\hat{p}]:=\texttt{OFFLINE}[L_p,p,sl,m,\{x_k^i,w_k^i\}_{i=1}^{ns}$]
		\end{center}
		\begin{itemize}
			\item \texttt{Select the values for $sl$ and $m$ }
			\item \texttt{Set} $L_{max}=0$ \texttt{and} $\hat{p}=0$
			\item \textit{for} ($p=0:sl:1$)
			\begin{itemize}
				\item \texttt{Initialize the log-likelihood function with} $L_p=0$ 
				\item \textit{for} ($k=1:m$)
				\begin{enumerate}[--]
					\item $[L_p,\{x_k^i,w_{p,k}^i\}_{i=1}^{ns}]=\texttt{LL}[L_p,\{x_k^i,w_{p,k}^i\}_{i=1}^{ns},y_k]$
				\end{enumerate}
				\item \textit{end for}
				\item \texttt{Update:} \texttt{If} $L_p>L_{max}$ \texttt{then}
				\begin{enumerate}[]
					\item $L_{max}=L_p$ \texttt{and} $\hat{p}=p$
				\end{enumerate} 
			\end{itemize}
			\item \textit{end for}
		\end{itemize}
\end{algorithm}
\begin{algorithm}
	\caption{Online identification of latency probability}\label{algo:4}
	\begin{center}
		[$L_{max},\hat{p}]:=\texttt{ONLINE}[L_p,p,sl,\{x_k^i,w_k^i\}_{i=1}^{ns}$]
	\end{center}
	\begin{itemize}
		\item \texttt{Select the value for $sl$ and set $L_p=0$}
		\item \textit{for} $(t=1:k)$
		\begin{itemize}
			\item \texttt{Set} $L_{max}=0$  \texttt{and} $ \hat{p}=0$ 
			\item \textit{for} $(p=0:sl:1)$
			\begin{enumerate}[--]
				\item $[L_p,\{x_t^i,w_{p,t}^i\}_{i=1}^{ns}]:= \texttt{LL}[L_p,\{x_t^i,w_{p,t}^i\}_{i=1}^{ns},y_t]$
			\end{enumerate}
			\item \textit{end for}
			\item \texttt{Update:} \texttt{If} $L_p>L_{max}$ \texttt{then}
			\begin{enumerate}[]
				\item $L_{max}=L_p$ \texttt{and} $\hat{p}=p$
			\end{enumerate}
		\end{itemize}
		\item \textit{end for}
	\end{itemize}
\end{algorithm}
\section{Simulation results}

 To showcase the effectiveness of proposed  PF against the established standard PF for a set of likely delayed measurements, two different types of filtering problems are simulated. In this simulation work, latency probability of delayed measurements is first identified by the numerical maximization of \eqref{eq:39} over $p \in[0,1]$, with the help of proposed PF algorithm. Subsequently, the estimated probability $p$  is used to implement the proposed PF for the given problems. A filter with the same structure as the proposed PF but with other than the true value of $N$ is also implemented to investigate the impact of selecting wrong value of maximum admissible delay. Finally, performance of the all filters are compared in terms of RMSE. It is also to be noted that the proposed PF with wrong value of $N$ are implemented with the true value of latency probability.  
 \subsection{Problem 1}
 
Non-stationary growth model has been widely used in literature for validation of performances by nonlinear filters\cite{hermoso2007extended,hermoso2009unscented,kotecha2003gaussian,wang2014design}. Its non-linear dynamics can be represented by following equations
\begin{equation}
x_k=0.5x_{k-1} + 25\dfrac{x_{k-1}}{1+x_{k-1}^2} +8 \cos(1.2k)+q_{k-1},
\end{equation} 
\begin{equation}\label{eq:41}
z_k=\dfrac{x_k^2}{20} +v_k,
\end{equation}
where $q_k$ and $v_k$ are uncorrelated white processes. For this simulation work, state and measurement noises are  considered as zero mean Gaussian with $E[q_k^2]=10$ and $E[v_k^2]=1$, respectively. True state $x_0$ is also taken as Gaussian random variable having zero mean and unity variance. The received measurements data, $y_{1:k}$ are generated by using \eqref{eq:41} and  \eqref{eq:3} with $N=2$. The number of particles used for simulation of this problem is, $ns=1000$.

Offline identification of latency probability is carried out by using Algorithm \ref{algo:3} with $sl=0.01$ and $m=500$. Latency probability ($p$) at the end of each ensemble is calculated and plotted in Figure \ref{fig_1}. For this case, when true value of $p$ is $0.5$, mean of that over 100 ensembles is calculated as $0.481$.  Online identification of latency probability is shown in Figure \ref{fig_2}. Here, identified probability at each time step is running average of estimated probabilities. We can see from Figure \ref{fig_2} that as time increases, $p$ converges close to its true value.

For filtering, proposed PF is implemented with $N=1$  and $N=2$ for same set of received measurements and results are compared against that of standard PF. To compare the results, root mean square errors (RMSE) calculated over 100 Monte Carlo (MC) runs are plotted over 50 time steps in Figure \ref{fig_3}. It can been seen that proposed PF with $N=2$ outperforms the other two filters. Moreover, it is interesting to observe that performance of proposed PF with $N=1$ is better than that of standard PF. Further, the average RMSE calculated over $50$ time steps for different values of true latency probability is shown in Figure \ref{fig_4}. As expected, at higher probability value where  packet drop is more likely, filter designed for $N=2$ performs better than the other two.     
\begin{figure}[!t]
\centering
\includegraphics[width=3in,height=2.5in]{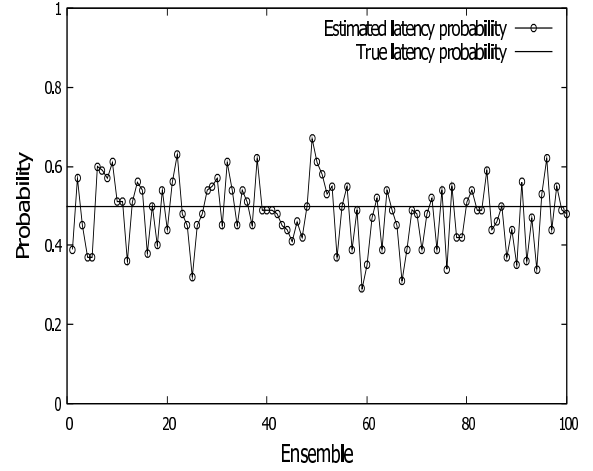}
\caption{Offline estimated latency probability (Problem 1)}
\label{fig_1}
\end{figure}
\begin{figure}[!t]
	\centering
\includegraphics[width=3in,height=2.5in]{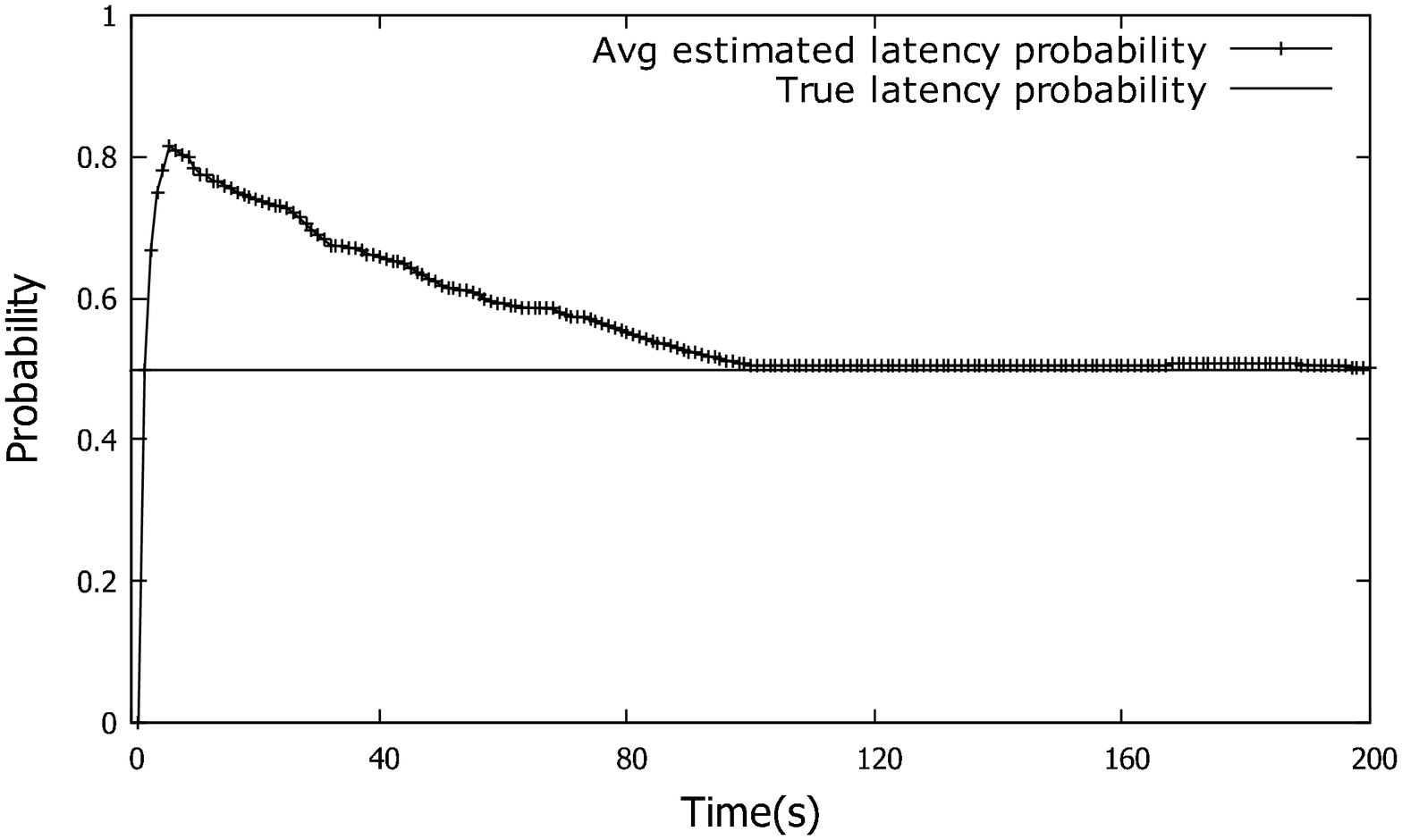}
\caption{Online estimated latency probability (Problem 1)}
\label{fig_2}
\end{figure}
\begin{figure}[!t]
	\centering
	\includegraphics[width=3in,height=2.5in]{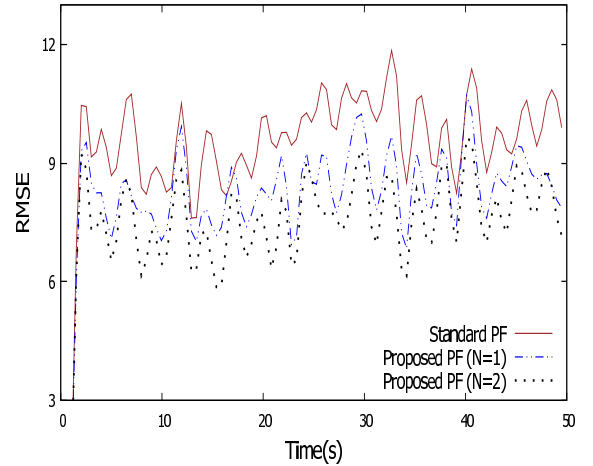}
	\caption{RMSE vs time for different filters considering $\hat{p}=0.481$ (Problem 1)}
	\label{fig_3}
\end{figure}
\begin{figure}[!t]
	\centering
	\includegraphics[width=3in,height=2.5in]{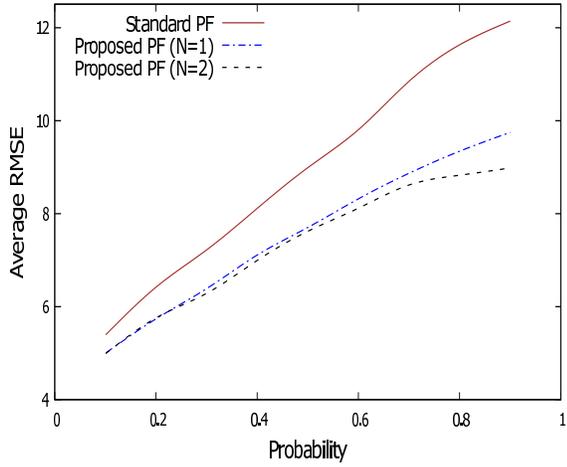}
	\caption{Average RMSE vs probability for different filters (Problem 1)}
	\label{fig_4}
\end{figure}
\subsection{Problem 2}
 
 In this simulation work, we consider a bearing-only tracking (BOT) problem where a moving target is being tracked from a moving platform \cite{lin2002comparison}. The BOT problem has mainly two components, namely, the target kinematics and the tracking platform kinematics as shown in Figure \ref{fig_5}. The tracking platform motion may be represented by the following equations
 \begin{align}
 \begin{split}
 x_{tp,k}&=\bar{x}_{tp,k} +\varDelta x_{tp,k}, \qquad k=0,1,\cdots,n_{step}\\
 y_{tp,k}&=\bar{y}_{tp,k}+ \varDelta y_{tp,k}, \qquad k=0,1,\cdots,n_{step},
 \end{split}
 \end{align}
 where $x_{tp,k}$ and $y_{tp,k}$ represent the X and Y co-ordinates of tracking platform at $k^{th}$ time-step, respectively.  $\bar{x}_{tp,k}$ and $\bar{y}_{tp,k}$ are known mean co-ordinates of platform position and $\varDelta x_{tp,k}$ and $\varDelta y_{tp,k}$ are uncorrelated zero mean Gaussian white noise with variances, $r_x=1  m^2$ and $r_y=1 m^2$, respectively. The mean values for position co-ordinates (in meters) are $\bar{x}_{tp,k}=4kT$ and $\bar{y}_{tp,k}=20$, where $T$ is sampling time for discretization expressed in \textit{seconds} ($s$).
 
 The target moves in X direction according to following discrete state space relations.
 \begin{figure}[]
 	\centering
 	\includegraphics[width=3in,height=2.5in]{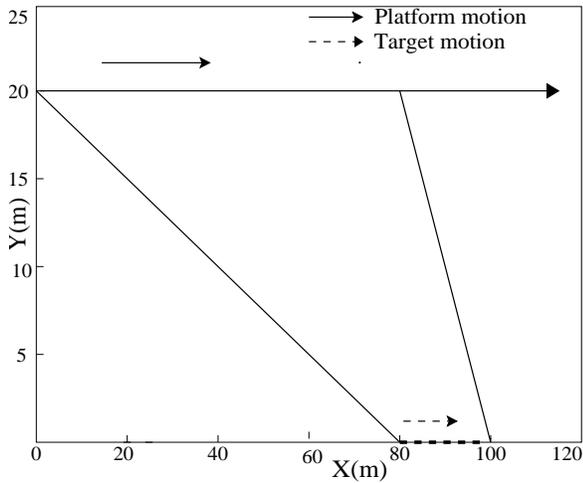}
 	\caption{Platform and target trajectory}
 	\label{fig_5}
 	\end{figure}
 \begin{equation}
 x_{k}=Fx_{k-1} +Gq_{k-1},
 \end{equation}
 where
 $$x_k=\begin{bmatrix} 
 	x_{1,k}\\
 	x_{2,k}
 \end{bmatrix}, \quad F=\begin{bmatrix}
 1 & T\\
 0 & T
 \end{bmatrix},\quad G=\begin{bmatrix}
 T^2/2\\
 T
 \end{bmatrix}$$
 with $x_{1,k}$ denoting the position along X axis and $x_{2,k}$ denoting the velocity (in \textit {m/s}) of the target. $q_k$ is independent zero mean white Gaussian noise with variance $r_q=0.01\;m^2/s^4$ and initial true states are assumed to be $ x_0=[80\;1]^T$.
 
 The sensor measurement is given by
 \begin{equation}\label{eq:44}
 z_{k}=z_{m,k}+v_k,
 \end{equation}
 where
 \begin{equation*}
 z_{m,k} = h[x_{tp,k},y_{tp,k},x_{1,k}]=\arctan \dfrac{y_{tp,k}}{x_{1,k}-x_{tp,k}}
 \end{equation*}
 is the angle between the X axis and the line of sight from the sensor to target and $v_k$ is Gaussian with zero mean and variance $r_v=(3^{\circ})^2$, which is assumed to be independent of platform motion noises.
 
For the simulation work of this BOT problem, measurements data $y_{1:k}$ are generated for $21\, s$ by using \eqref{eq:3} and \eqref{eq:44} with $N=2$. Number of particles used for approximating the pdf is, $ns=3000$.  At the beginning of simulation, latency probability of received measurements is identified offline with $T=0.05\,s$, $m=400$, and $sl=0.01$. Latency probability at the end of each ensemble is calculated and plotted in Figure \ref{fig_6}. Mean value of estimated $p$ over 100 ensembles is calculated as $0.460$, whereas  its true value is $0.5$. For online identification, $T$ is taken as $0.1\;s$ and running average of latency probability is calculated at each time step. Figure \ref{fig_7} shows the identified latency probability for $200$ time steps. From the plot, it can be seen that as number of time steps increase, $p$ converges close to its true value.
\begin{figure}[]
	\centering
	\includegraphics[width=3in,height=2.5in]{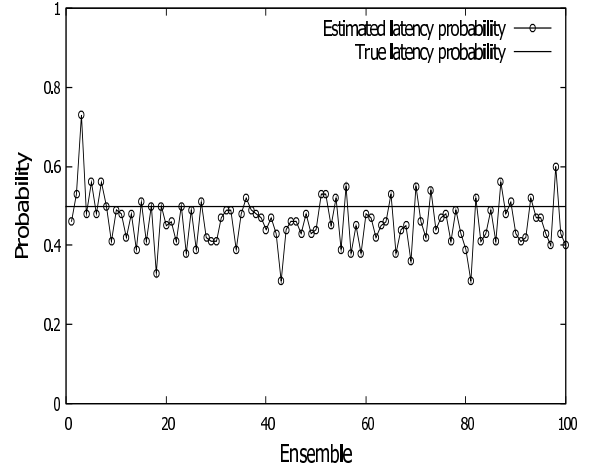}
	\caption{Offline estimated latency probability (Problem 2)}
	\label{fig_6}
\end{figure}
\begin{figure}[]
	\centering
	\includegraphics[width=3in,height=2.5in]{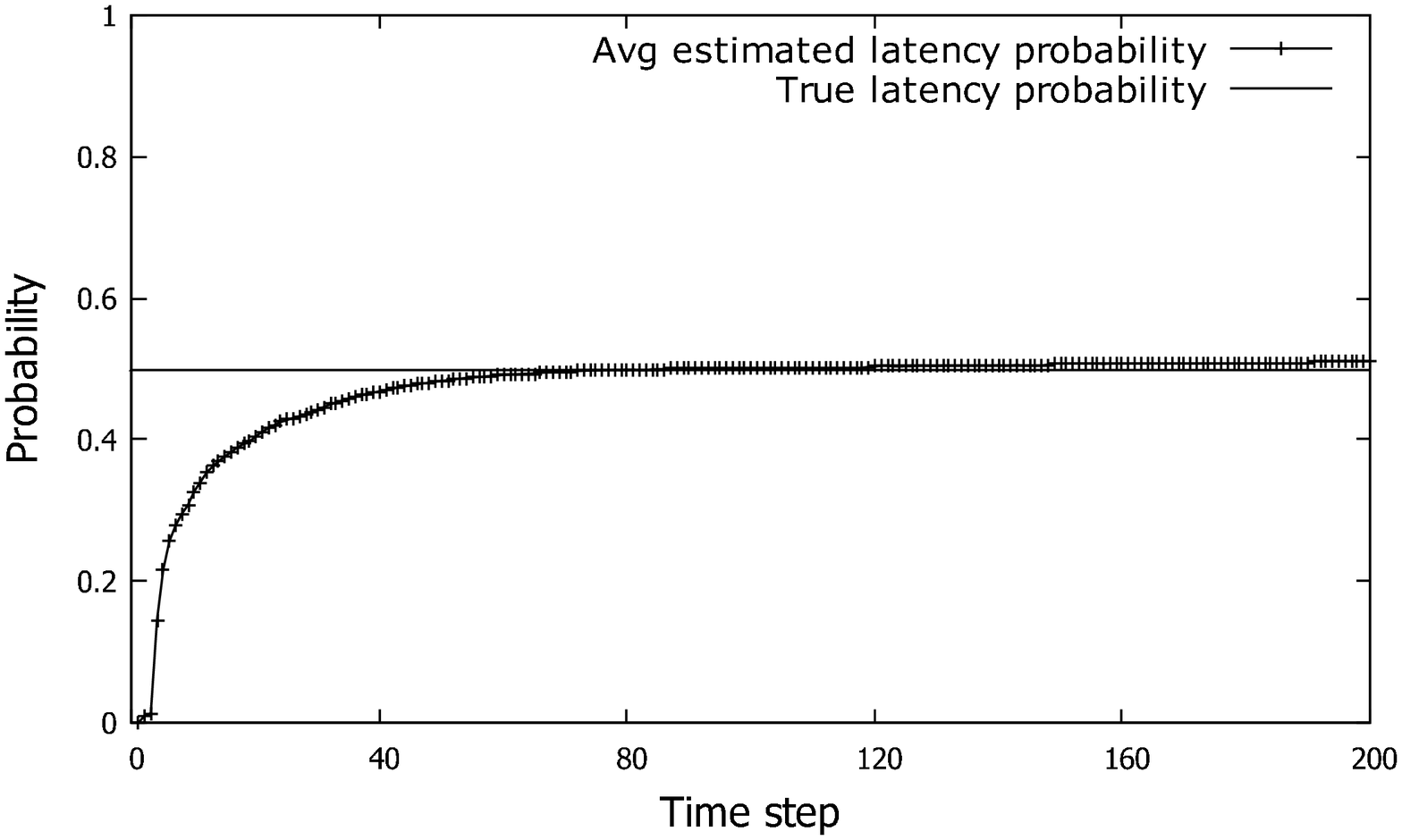}
	\caption{Online estimated latency probability (Problem 2)}
	\label{fig_7}
\end{figure}
\begin{figure}[!t]
	\centering
	\includegraphics[width=3in,height=2.5in]{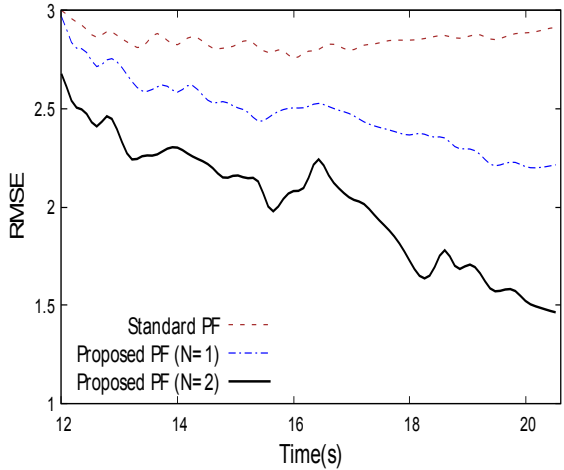}
	\caption{RMSE vs time for state $x_{1,k}$ considering $\hat{p}=0.46$ (Problem 2)}
	\label{fig_8}
	\end{figure}

Now, the proposed PF is implemented with $N=1$ and $N=2$ and corresponding estimated latency probabilities are used for filtering. To show effectiveness, the results of standard PF for same set of delayed measurements are compared against that of proposed PF. RMSE of three filters with true latency probability $p=0.5$ and sampling time $T=0.2\,s$, which have been calculated over $100$ MC runs, are plotted in Figures \ref{fig_8}-\ref{fig_9}. From the plots, it can be concluded that using a filter designed for delayed measurements is a better choice than a conventional PF where delays are not accounted.

Further in this simulation, the average RMSE is calculated over $100$ time steps for different values of $p$. The average RMSE for two states,  $x_{1,k}$ and $x_{2,k}$ are plotted in Figures \ref{fig_10} and \ref{fig_11} respectively. It can be seen that the difference in performance of the proposed PF and the standard PF becomes more pronounced as the value of probability increases.

\begin{figure}[!t]
	\centering
	\includegraphics[width=3in,height=2.5in]{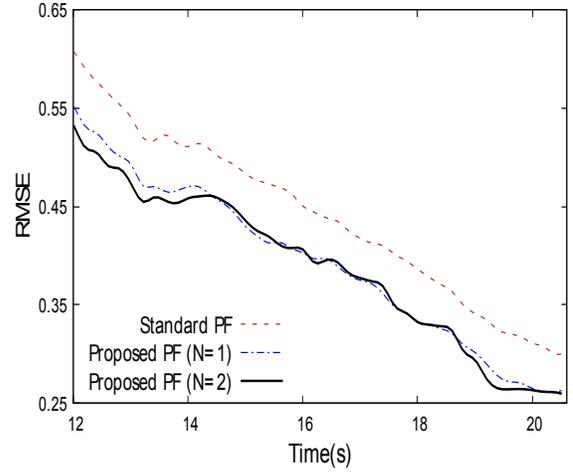}
	\caption{RMSE vs time for state $x_{2,k}$ considering $\hat{p}=0.46$ (Problem 2)}
	\label{fig_9}
\end{figure}
\begin{figure}[!t]
	\centering
	\includegraphics[width=3in,height=2.5in]{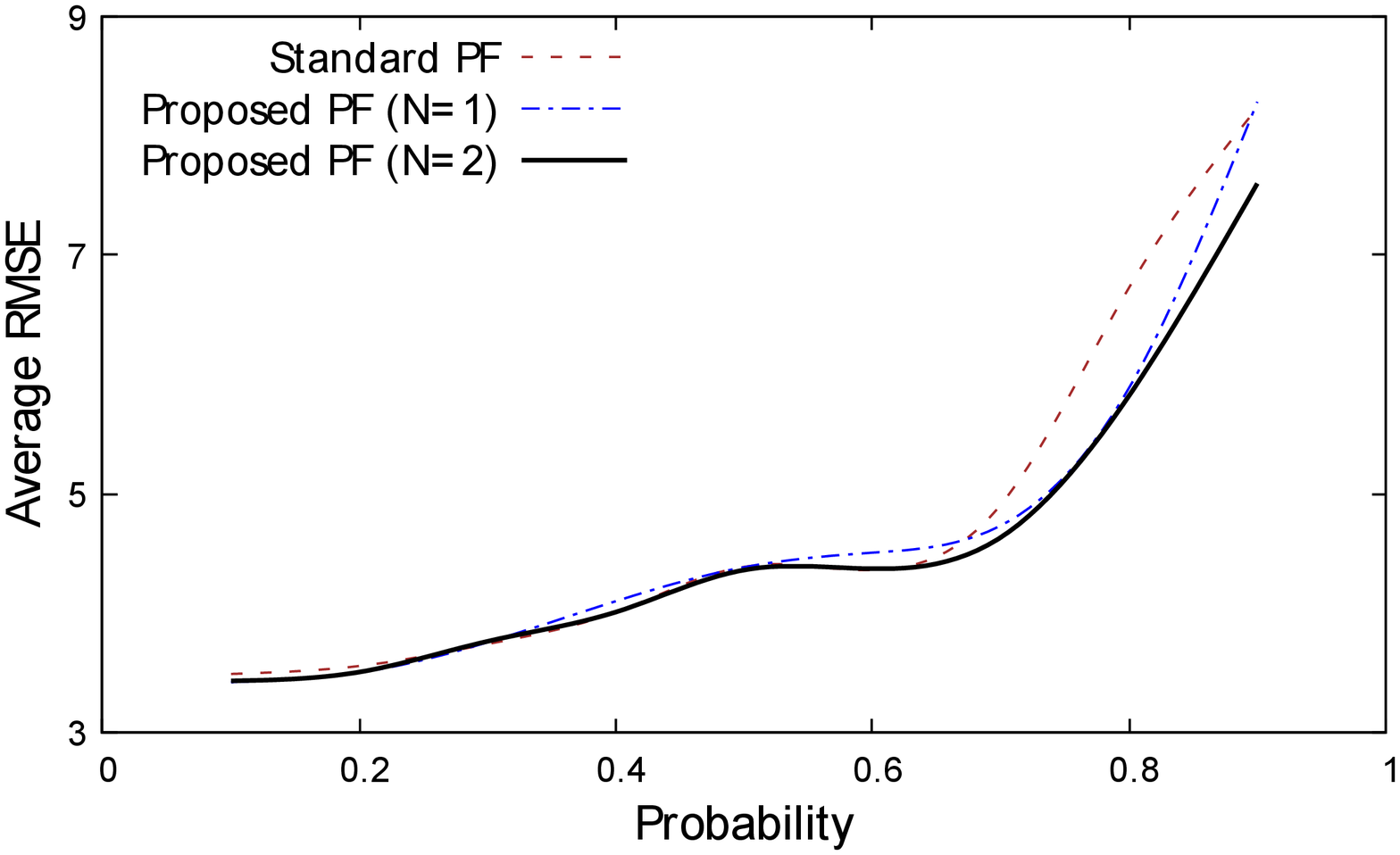}
	\caption{Average RMSE vs probability for state $x_{1,k}$ (Problem 2)}
	\label{fig_10}
\end{figure}
\begin{figure}[!t]
	\centering
	\includegraphics[width=3in,height=2.5in]{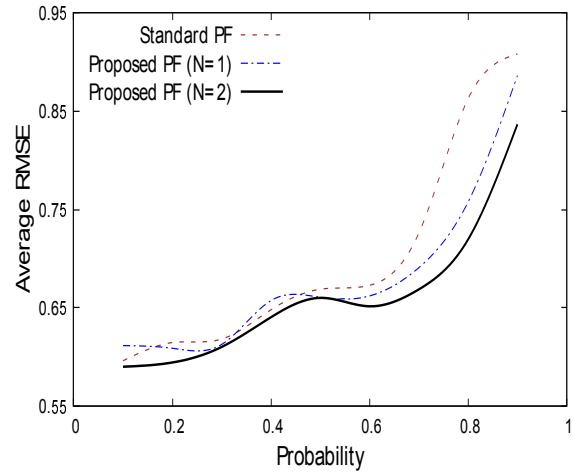}
	\caption{Average RMSE vs probability for state $x_{2,k}$ (Problem 2)}
	\label{fig_11}
\end{figure}

\section{Conclusion and discussion}

 The conventional PF loses its applicability if measurements are randomly delayed at the receiver end. This might be quite frequent in a system where distance between sensor and receiver matters or if its communication bandwidth is limited.

In this technical work, a recursion equation of importance weight has been developed stochastically in accordance with the delay in measurements. A practical measurement model based on i.i.d. Bernoulli random variables has been adopted which includes possibility of random delays in receiving observations along with packet drop situation if any measurement suffers a delay more than a chosen maximum possible delay. Moreover, latency probability of received measurements is usually unknown in practical cases. Hence, this paper presents a method to identify it in randomly delayed measurements environment. Further, it explores the conditions which ensure the convergence of the developed PF and subsequent trade-off it introduces in selecting maximum number of delays. 

To validate the performance of the proposed filter and showcase its superiority, two numerical examples are simulated using the standard PF, the proposed PF with wrong selection of maximum delay and the proposed PF with the correct value of maximum delay. Simulation results show that a filer designed for delayed measurements, even considering less delay than the actual, performs better than the conventional PF. If random delay is more likely in a system, number of maximum possible delay should be chosen such a way that it strikes a balance between avoiding information loss and minimizing convergence error.


%




\ifCLASSOPTIONcaptionsoff
  \newpage
\fi



%
%
%

\medskip
\bibliographystyle{ieeetr}
\bibliography{Particle_filter_paper1}

\begin{thebibliography}{10}

\bibitem{charalampidis2011computationally}
A.~C. Charalampidis and G.~P. Papavassilopoulos, ``{Computationally efficient
  Kalman filtering for a class of nonlinear systems},'' {\em IEEE Transactions
  on Automatic Control}, vol.~56, no.~3, pp.~483--491, 2011.

\bibitem{vsimandl2009derivative}
M.~{\v{S}}imandl and J.~Dun{\'\i}k, ``Derivative-free estimation methods: New
  results and performance analysis,'' {\em Automatica}, vol.~45, no.~7,
  pp.~1749--1757, 2009.

\bibitem{julier2000new}
S.~Julier, J.~Uhlmann, and H.~F. Durrant-Whyte, ``A new method for the
  nonlinear transformation of means and covariances in filters and
  estimators,'' {\em IEEE Transactions on automatic control}, vol.~45, no.~3,
  pp.~477--482, 2000.

\bibitem{lin2002comparison}
X.~Lin, T.~Kirubarajan, Y.~Bar-Shalom, and S.~Maskell, ``{Comparison of EKF,
  pseudomeasurement and particle filters for a bearing-only target tracking
  problem},'' in {\em Proc. SPIE on Signal and Data Processing of Small
  Targets}, vol.~4728, 2002.

\bibitem{lopes2011particle}
H.~F. Lopes and R.~S. Tsay, ``{Particle filters and Bayesian inference in
  financial econometrics},'' {\em Journal of Forecasting}, vol.~30, no.~1,
  pp.~168--209, 2011.

\bibitem{stordal2011bridging}
A.~S. Stordal, H.~A. Karlsen, G.~N{\ae}vdal, H.~J. Skaug, and B.~Vall{\`e}s,
  ``{Bridging the ensemble Kalman filter and particle filters: the adaptive
  Gaussian mixture filter},'' {\em Computational Geosciences}, vol.~15, no.~2,
  pp.~293--305, 2011.

\bibitem{ho1964bayesian}
Y.~Ho and R.~Lee, ``{A Bayesian approach to problems in stochastic estimation
  and control},'' {\em IEEE Transactions on Automatic Control}, vol.~9, no.~4,
  pp.~333--339, 1964.

\bibitem{carpenter1999improved}
J.~Carpenter, P.~Clifford, and P.~Fearnhead, ``Improved particle filter for
  nonlinear problems,'' {\em IEE Proceedings-Radar, Sonar and Navigation},
  vol.~146, no.~1, pp.~2--7, 1999.

\bibitem{arulampalam2002tutorial}
M.~S. Arulampalam, S.~Maskell, N.~Gordon, and T.~Clapp, ``{A tutorial on
  particle filters for online nonlinear/non-Gaussian Bayesian tracking},'' {\em
  IEEE Transactions on signal processing}, vol.~50, no.~2, pp.~174--188, 2002.

\bibitem{nelson1976simultaneous}
L.~Nelson and E.~Stear, ``The simultaneous on-line estimation of parameters and
  states in linear systems,'' {\em IEEE Transactions on automatic Control},
  vol.~21, no.~1, pp.~94--98, 1976.

\bibitem{crisan2002survey}
D.~Crisan and A.~Doucet, ``A survey of convergence results on particle
  filtering methods for practitioners,'' {\em IEEE Transactions on signal
  processing}, vol.~50, no.~3, pp.~736--746, 2002.

\bibitem{zuo2013adaptive}
J.-Y. Zuo, Y.-N. Jia, Y.-Z. Zhang, and W.~Lian, ``Adaptive iterated particle
  filter,'' {\em Electronics letters}, vol.~49, no.~12, pp.~742--744, 2013.

\bibitem{moose1985adaptive}
R.~Moose and T.~Dailey, ``Adaptive underwater target tracking using passive
  multipath time-delay measurements,'' {\em IEEE transactions on acoustics,
  speech, and signal processing}, vol.~33, no.~4, pp.~778--787, 1985.

\bibitem{kolmanovsky2001optimal}
I.~V. Kolmanovsky and T.~L. Maizenberg, ``Optimal control of continuous-time
  linear systems with a time-varying, random delay,'' {\em Systems \& Control
  Letters}, vol.~44, no.~2, pp.~119--126, 2001.

\bibitem{nilsson1998stochastic}
J.~Nilsson, B.~Bernhardsson, and B.~Wittenmark, ``Stochastic analysis and
  control of real-time systems with random time delays,'' {\em Automatica},
  vol.~34, no.~1, pp.~57--64, 1998.

\bibitem{ma2011optimal}
J.~Ma and S.~Sun, ``Optimal linear estimators for systems with random sensor
  delays, multiple packet dropouts and uncertain observations,'' {\em IEEE
  Transactions on Signal Processing}, vol.~59, no.~11, pp.~5181--5192, 2011.

\bibitem{ma2016linear}
J.~Ma and S.~Sun, ``Linear estimators for networked systems with one-step
  random delay and multiple packet dropouts based on prediction compensation,''
  {\em IET Signal Processing}, vol.~11, no.~2, pp.~197--204, 2016.

\bibitem{yan2017networked}
B.~Yan, H.~Lev-Ari, and A.~M. Stankovic, ``Networked state estimation with
  delayed and irregularly-spaced time-stamped observations,'' {\em IEEE
  Transactions on Control of Network Systems}, 2017.

\bibitem{hermoso2007extended}
A.~Hermoso-Carazo and J.~Linares-P{\'e}rez, ``Extended and unscented filtering
  algorithms using one-step randomly delayed observations,'' {\em Applied
  Mathematics and Computation}, vol.~190, no.~2, pp.~1375--1393, 2007.

\bibitem{hermoso2009unscented}
A.~Hermoso-Carazo and J.~Linares-P{\'e}rez, ``Unscented filtering algorithm
  using two-step randomly delayed observations in nonlinear systems,'' {\em
  Applied Mathematical Modelling}, vol.~33, no.~9, pp.~3705--3717, 2009.

\bibitem{singh2016quadrature}
A.~K. Singh, S.~Bhaumik, and P.~Date, ``Quadrature filters for one-step
  randomly delayed measurements,'' {\em Applied Mathematical Modelling},
  vol.~40, no.~19, pp.~8296--8308, 2016.

\bibitem{wang2013gaussian}
X.~Wang, Y.~Liang, Q.~Pan, and C.~Zhao, ``{Gaussian filter for nonlinear
  systems with one-step randomly delayed measurements},'' {\em Automatica},
  vol.~49, no.~4, pp.~976--986, 2013.

\bibitem{arasaratnam2009cubature}
I.~Arasaratnam and S.~Haykin, ``{Cubature Kalman filters},'' {\em IEEE
  Transactions on automatic control}, vol.~54, no.~6, pp.~1254--1269, 2009.

\bibitem{singh2017modified}
A.~K. Singh, P.~Date, and S.~Bhaumik, ``{A modified Bayesian filter for
  randomly delayed measurements},'' {\em IEEE Transactions on Automatic
  Control}, vol.~62, no.~1, pp.~419--424, 2017.

\bibitem{zhang2016particle}
Y.~Zhang, Y.~Huang, N.~Li, and L.~Zhao, ``Particle filter with one-step
  randomly delayed measurements and unknown latency probability,'' {\em
  International Journal of Systems Science}, vol.~47, no.~1, pp.~209--221,
  2016.

\bibitem{huang2015particle}
Y.~Huang, Y.~Zhang, N.~Li, and L.~Zhao, ``Particle filter for nonlinear systems
  with multiple step randomly delayed measurements,'' {\em Electronics
  Letters}, vol.~51, no.~23, pp.~1859--1861, 2015.

\bibitem{kotecha2003gaussian}
J.~H. Kotecha and P.~M. Djuric, ``{Gaussian sum particle filtering},'' {\em
  IEEE Transactions on signal processing}, vol.~51, no.~10, pp.~2602--2612,
  2003.

\bibitem{wang2014design}
X.~Wang, Y.~Liang, Q.~Pan, C.~Zhao, and F.~Yang, ``{Design and implementation
  of Gaussian filter for nonlinear system with randomly delayed measurements
  and correlated noises},'' {\em Applied Mathematics and Computation},
  vol.~232, pp.~1011--1024, 2014.

\end{thebibliography}
%
%







\end{document}